\bfseries\color[rgb]{0,0,1},
\newcommand{\figref}[1]{Figure~\ref{#1}}
\newcommand{\tabref}[1]{Table~\ref{#1}}
\newcommand{\secref}[1]{Section~\ref{#1}}
\newcommand{\java}[1]{{\upshape\lstinline[language={Java},basicstyle=\normalsize]{#1}}}
\newcommand{\sjava}[1]{{\upshape\lstinline[language={Java},basicstyle=\small]{#1}}}
\newcommand{\rulename}[1]{\ensuremath{\mbox{\textsc{#1}}}}
\newcommand{\syntax}[1]{\ensuremath{\mathtt{#1}}}
\newcommand{\function}[1]{\mbox{\normalfont\sffamily#1}}
\newcommand{\owner}{\function{root}}
\newcommand{\acyclic}{\function{acyclic}}
\newcommand{\fresh}{\function{fresh}}
\newtheorem{definition}{Definition}
\newtheorem{example}{Example}
\newtheorem{theorem}{Theorem}
\newtheorem*{proof}{Proof}
\newtheorem{lemma}{Lemma}
\begin{document}

\setlength{\pdfpageheight}{\paperheight}
\setlength{\pdfpagewidth}{\paperwidth}






\title{Dynamic Checking of Safe Concurrent Memory Access using Shared Ownership}

\authorinfo{Mischael Schill, Sebastian Nanz, and Bertrand Meyer}
           {ETH Z\"urich, Switzerland}
           {firstname.lastname@inf.ethz.ch}
\maketitle

\date{17 April 2014}
\begin{abstract}
In shared-memory concurrent programming, shared resources can be protected using synchronization mechanisms such as monitors or channels. The connection between these mechanisms and the resources they protect is, however, only given implicitly; this makes it difficult both for programmers to apply the mechanisms correctly and for compilers to check that resources are properly protected. This paper presents a mechanism to automatically check that shared memory is accessed properly, using a methodology called shared ownership. In contrast to traditional ownership, shared ownership offers more flexibility by permitting multiple owners of a resource. 
On the basis of this methodology, we define an abstract model of resource access that provides operations to manage data dependencies, as well as sharing and transfer of access privileges. The model is rigorously defined using a formal semantics, and shown to be free from data races. This property can be used to detect unsafe memory accesses when simulating the model together with the execution of a program. The expressiveness and efficiency of the approach is demonstrated on a variety of programs using common synchronization mechanisms.
\end{abstract}

\section{Introduction}
\label{sec:introduction}
Protecting the access to shared resources is one of the fundamental challenges in concurrent programming. Without taking the appropriate precautions, concurrent access of multiple processes to the same resource can result in atomicity violations such as data races, which are difficult to detect as they often only manifest in a fraction of all the possible interleavings of a concurrent program. To avoid this issue, programmers must protect shared resources using synchronization mechanisms, such as monitors, channels or semaphores. Unfortunately, using such synchronization mechanisms is a delicate task. By mistake, programmers may take too few or too many locks, locks in the wrong order, or the wrong locks altogether. The primary reason for these difficulties is the missing connection between locks and the shared resources they are supposed to protect. To be more precise, a lock does not protect state but rather sections of code, which are called critical sections; the shared resource is protected indirectly by the unwritten contract that it may only be accessed from within a critical section protected by a specific lock.

An approach to prove that shared resources are protected is the ownership methodology~\cite{Clarke:1998:OTF:286936.286947}, which has originally been designed for static reasoning on sequential programs but can be extended to handle concurrency~\cite{Boyapati:2002:OTS:582419.582440,jacobs2005safe}. The central idea of ownership is that each object has an \textit{owner}, which is either void (the object is free), an object (the object is part of an aggregate), or a process (the object is owned by a process).
%
%
%
%
%
%
%
In order to access an object, a process must acquire ownership of the object. A free object can be acquired by acquiring its lock; upon successful locking, the owner is set to the process that holds the lock. An object that is part of an aggregate cannot be acquired directly, but rather only by acquiring the free object at root of the ownership tree created through the ownership relations.

There are some limitations with this approach: it concentrates on using locks for handling exclusive access; it does not incorporate temporary shared reading, e.g.\ through readers-writer locks; and it statically specifies ownership relations through annotations on attributes. These limitations stem from the application of ownership mainly for modular static verification of programs. Only recently there has been some work~\cite{Martin:2010} on using ownership principles as a methodology for structuring concurrent programs.

This paper proposes a variant of the ownership methodology called \textit{shared ownership} that incorporates shared reading and can therefore support more synchronization techniques. Programs adhering to this methodology do not exhibit data races. As an application of shared ownership, we present a dynamic method to automatically check adherence of Java programs to the methodology.

In shared ownership, an object has always at least one owner, which may be either another object or a process. 
The ownership relations create a directed graph, rather than a tree, since multiple owners are allowed. Ownership graphs can be used to ensure mutual exclusion on write operations and to define a set of readers for read operations: a process may modify the state of the resources that it exclusively owns, i.e.\ all resources where the process is the sole root of the ownership subgraph that includes the resource; a process has read access to all resources it owns, i.e.\ all resources where the process is a root of the ownership subgraph that includes the resource. 

To track information about shared ownership, we define 
Shared Ownership Model (SOM), which provides an abstract model of resource access and describes operations on ownership graphs, such as transfer of ownership between processes. We rigorously specify the methodology using operational semantics, and show that it ensures freedom from data races.
%
%
An integration of SOM into the programming language Java allows us to use SOM to dynamically check Java programs for data races: the ownership dependencies arising in the program are simulated using SOM; if the simulation fails, an ownership violation has been found, which may manifest itself in a data race. 
We evaluate the performance of our prototype, confirming that the approach is efficiently usable. Furthermore, since the ownership graph does not affect the functional aspects of a program, it can be removed by the compiler to optimize the program for production environments, where the information provided by the ownership graph is no longer needed.

The remainder of the paper is structured as follows. \secref{sec:rcl} specifies SOM using an operational semantics, and provides the proof of data-race freedom. \secref{sec:integration} describes the integration of the methodology into Java, expresses various synchronization mechanisms in terms of the abstract model, and demonstrates how the approach can be used to dynamically check for data races. \secref{sec:performanceanalysis} is dedicated to the performance analysis of using the approach as a debugging tool. \secref{sec:related-work} presents related work and \secref{sec:conclusion} concludes the paper with an outlook on future work.

\section{Shared Ownership}
\label{sec:rcl}

This section presents Shared Ownership Model, an abstract model of resource access. It uses the formalism of ownership graphs, which specify the access rights of processes to shared resources. We show that the model is race-free.

\subsection{Ownership graphs}
\label{sec:rcg}

We define an entity $e$ to be either a process $\pi$ or a (shared) resource $\rho$. Ownership graphs express ownership relationships between entities.
\begin{definition}[Ownership graph]\label{def:rcg}
An \emph{ownership graph} is a directed graph over a set of entities such that the following conditions hold:
\begin{itemize}
\item[(P)] No process node has an incoming edge.
\item[(R)] All resource nodes have at least one incoming edge.
\item[(A)] The graph is acyclic.
\end{itemize}
\end{definition}
If an entity $e$ is a predecessor of a resource $\rho$ in an ownership graph, $\rho$ is said to be \emph{owned} by $e$. If a resource $\rho$ is reachable from a process $\pi$, we say that $\pi$ is a \emph{root owner}, or simply a \emph{root}, of $\rho$. Processes have no owner and are thereby always roots. Resources on the other hand are always owned, therefore they are never roots. A circular ownership relation would mean that the involved entities mutually own each other, therefore we define ownership graphs to be acyclic.

\begin{example}
  \figref{fig:rcg-example} shows an ownership graph over two processes $\pi_1$ and $\pi_2$, and four resources $\rho_1, \dots, \rho_4$. From the graph, we can see that $\rho_3$ is owned by $\rho_1$, and has process $\pi_1$ as root. Resource $\rho_4$ is owned by both $\rho_1$ and $\rho_2$, and has two roots $\pi_1$ and $\pi_2$. 
\end{example}

\begin{figure}[htb]
\vspace{-2ex}
  \centering

\begin{tikzpicture}[scale=0.75,transform shape]
  \Vertex[x=0,y=2,L=$\pi_1$]{p1}
  \Vertex[x=0,y=1,L=$\pi_2$]{p2}
  \Vertex[x=2,y=2,L=$\rho_1$]{r1}
  \Vertex[x=4,y=2,L=$\rho_3$]{r2}
  \Vertex[x=4,y=1,L=$\rho_4$]{r3}
  \Vertex[x=2,y=1,L=$\rho_2$]{r4}

  \tikzstyle{LabelStyle}=[fill=white,sloped]
//  \tikzstyle{EdgeStyle}=[bend left]
  \tikzstyle{EdgeStyle}=[post]
  \Edge[](p1)(r1)
  \Edge[](r1)(r2)
  \Edge[](r1)(r3)

  \Edge[](p2)(r4)
  \Edge[](r4)(r3)
\end{tikzpicture}
\vspace{-1ex}
  \caption{A simple ownership graph}
  \label{fig:rcg-example}
\vspace{-1ex}
\end{figure}
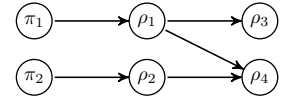

Within our methodology, ownership graphs will be used as a means to specify the read and write access rights of different entities. If a resource $\rho$ has a single root $\pi$, then $\pi$ is said to have \emph{write access} to $\rho$ and is allowed to both read and write $\rho$. If a resource $\rho$ has multiple roots, they are said to have \emph{read access} to $\rho$.

\subsection{Shared Ownership Model}
\label{sec:semantics}

We introduce \emph{Shared Ownership Model (SOM)}, which specifies operations on ownership graph.

\paragraph{Syntax.} The syntax of SOM is shown in \figref{fig:syntax}. It consists of configurations $C$, processes $P$, entities $e$, and statements $s$. 
A configuration is a pair of a process $P$ and an ownership graph $\sigma$. Programs consist of parallel processes of the form $\pi : s$, where $\pi$ is the identifier of the process and $s$ is a statement. Entities $e$ are differentiated into processes, described by their identifier $\pi$, and resources $\rho$.

\begin{figure}[htb]
\vspace{-2ex}
{\small
\[
\begin{array}{lll@{\quad}l}
C &::= &\langle P, \sigma \rangle& \mbox{Configuration}\\[.5ex]
P &::= &\pi : s& \mbox{Process}\\
	&| & P | P & \mbox{Parallel composition}\\[.5ex]
e &::= &\pi& \mbox{Process identifier}\\
	&| &\rho& \mbox{Resource}\\[.5ex]
s &::= &\rho.\syntax{read}& \mbox{Read from } \rho\\
	&| &\rho.\syntax{write}& \mbox{Write to } \rho\\
	&| &\rho.\syntax{pass}(e, e')& \mbox{Pass ownership over } \rho \\
	& & & \mbox{from } e \mbox{ to } e'\\
	&| &\rho.\syntax{share}(e)& \mbox{Share ownership over } \rho \mbox{ with } e\\
	&| &\rho.\syntax{release}(e)& \mbox{Release ownership over } \rho \mbox{ by } e\\
	&| &\pi := \syntax{spawn}(s)& \mbox{Spawn a new process } \pi\\
	&| &\rho := e.\syntax{allocate}& \mbox{Allocate a new resource } \rho \\
	& & & \mbox{owned by } e
\end{array}
\]
}
\vspace{-3.5ex}
\caption{The syntax of SOM}
\label{fig:syntax}
\end{figure}

Statements model the read/write access to a resource $\rho$ using the statements $\rho.\syntax{read}$ and $\rho.\syntax{write}$. Three statements allow the executing process $\pi$ to modify an ownership graph: if $\pi$ has write access over $\rho$, it can use the statement $\rho.\syntax{pass}(e, e')$ to pass ownership over resource $\rho$ from entity $e$ to entity $e'$; if $\pi$ has read access of $\rho$, it can use the statement $\rho.\syntax{share}(e)$ to share ownership over resource $\rho$ with entity $e$; if $\pi$ has write access over $e$, it can use the statement $\rho.\syntax{release}(e)$ to  release ownership by $e$ over $\rho$. A new process $\pi' : s$ can be spawned using $\pi' := \syntax{spawn}(s)$. A new resource $\rho$ owned by $e$ can be allocated using $\rho := e.\syntax{allocate}$. 

\paragraph{Semantics.} The definition of the operational semantics requires three auxiliary functions: $\owner_\sigma(e)$ retrieves the root owners of entity $e$ from the ownership graph $\sigma$; $\acyclic(\sigma)$ determines whether $\sigma$ is acyclic; finally, $\fresh_\sigma(e)$ denotes that the entity $e$ is not present in $\sigma$.

\figref{fig:inference} shows the inference rules of SOM. Rules $\rulename{par}_1$ and $\rulename{par}_2$ are rules for parallel composition of general processes $P$. All other rules specify the execution of statements for a currently executing process $\pi$.
%
Inference rule $\rulename{read}$ allows $\pi$ to read a resource $\rho$, if the current ownership graph $\sigma$ specifies $\pi$ as a root of $\rho$; rule $\rulename{write}$ specifies that $\pi$ can write $\rho$ if it is the only root of $\rho$. 
Rule $\rulename{pass}$ allows $\pi$ to pass ownership over $\rho$ from $e$ to $e'$ if it is the only root of $\rho$, $e$ directly owns $\rho$, and the passing of ownership does not introduce cycles into the ownership graph. 

\newcommand{\infspacing}{1.5ex}
\begin{figure*}
\centering
{\small
\begin{gather*}
	\infer[\rulename{par}_1]
		{\langle P_1 | P_2,\sigma \rangle 
			\rightarrow 
		\langle P_1' | P_2,\sigma' \rangle}
		{\langle P_1,\sigma \rangle 
			\rightarrow 
		\langle P_1',\sigma' \rangle} \quad
	\infer[\rulename{par}_2]
		{\langle P_1 | P_2,\sigma \rangle 
			\rightarrow 
		\langle P_1 | P_2',\sigma' \rangle}
		{\langle P_2,\sigma \rangle 
			\rightarrow 
		\langle P_2',\sigma' \rangle} \\[\infspacing]
	\infer[\rulename{read}]
		{\langle \pi : \rho.\syntax{read} ; s,\sigma \rangle 
			\rightarrow 
		\langle \pi : s,\sigma \rangle}
		{\pi \in \owner_\sigma(\rho)} \quad	
	\infer[\rulename{write}]
		{\langle \pi : \rho.\syntax{write} ; s,\sigma \rangle 
			\rightarrow 
		\langle \pi : s,\sigma \rangle}
		{\{\pi\} = \owner_\sigma(\rho)} \\[\infspacing]	
	\infer[\rulename{pass}]
		{\langle \pi : \rho.\syntax{pass}(e, e') ; s,\sigma \rangle 
			\rightarrow 
		\langle \pi : s,\sigma'  \rangle} 
		{ \{\pi\} = \owner_\sigma(\rho) \land \{e \rightarrow \rho\} \subseteq \sigma \land \sigma' = ((\sigma \setminus \{e \rightarrow \rho\})\cup\{e' \rightarrow \rho\}) \land \acyclic(\sigma') } \\[\infspacing]
	\infer[\rulename{share}]
		{\langle \pi : \rho.\syntax{share}(e) ; s,\sigma \rangle 
			\rightarrow 
		\langle \pi : s,\sigma' \rangle} 
		{ \pi \in \owner_\sigma(\rho) \land \sigma' = \sigma \cup\{e \rightarrow \rho\} \land \acyclic(\sigma')} \quad
	\infer[\rulename{release}]
		{\langle \pi : \rho.\syntax{release}(e) ; s,\sigma \rangle 
			\rightarrow 
		\langle \pi : s,\sigma \setminus \{e \rightarrow \rho\}  \rangle} 
		{ \{\pi\} = \owner_\sigma(e) \land \{e \rightarrow \rho\} \subseteq \sigma \land \{e' \rightarrow \rho\} \subseteq \sigma \land e' \neq e} \\[\infspacing]
	\infer[\rulename{spawn}]
		{\langle \pi : \pi' := \syntax{spawn}(s') ; s,\sigma \rangle 
			\rightarrow 
		\langle \pi : s | \pi' : s',\sigma \rangle} 
		{ \fresh_\sigma(\pi')} \quad
	\infer[\rulename{allocate}]
		{\langle \pi : \rho := e.\syntax{allocate} ; s,\sigma \rangle 
			\rightarrow 
		\langle \pi : s,\sigma \cup \{e \rightarrow \rho\} \rangle} 
		{ \fresh_\sigma(\rho)} \\[-6ex]
\end{gather*}
}
\vspace{-2ex}
\caption{Inference rules of SOM}
\label{fig:inference}
\end{figure*}

\begin{example}
  \figref{fig:pass-example} shows an ownership graph before and after the execution of process $\pi_1 : \rho_2.\syntax{pass}(\pi_1, \pi_2)$. Note that $\pi_1$ is the only root of $\rho_2$ and is therefore permitted to pass ownership. After executing the statement, the ownership edge $\pi_1 \rightarrow \rho_2$ has been removed, and the new edge $\pi_2 \rightarrow \rho_2$ has been inserted.
\end{example}

\begin{figure}[htb]
\vspace{-2ex}
\centering
\begin{subfigure}[b]{0.2\textwidth}
\begin{tikzpicture}[scale=0.75,transform shape]
  \Vertex[x=0,y=4,L=$\pi_2$]{p2}
  \Vertex[x=0,y=6,L=$\pi_1$]{p3}
  \Vertex[x=2,y=4,L=$\rho_3$]{r4}
  \Vertex[x=2,y=5,L=$\rho_2$]{r5}
  \Vertex[x=2,y=6,L=$\rho_1$]{r6}
  \Vertex[x=4,y=4,L=$\rho_{6}$]{r10}
  \Vertex[x=4,y=5,L=$\rho_{5}$]{r11}
  \Vertex[x=4,y=6,L=$\rho_{4}$]{r12}

  \tikzstyle{LabelStyle}=[fill=white,sloped]
//  \tikzstyle{EdgeStyle}=[bend left]
  \tikzstyle{EdgeStyle}=[post]

  \Edge[](p2)(r4)

  \Edge[color=blue](p3)(r5)
  \Edge[](p3)(r6)

  \Edge[](r4)(r10)

  \Edge[](r5)(r11)
  \Edge[](r5)(r12)
\end{tikzpicture}
\caption{}
\end{subfigure}
\qquad
\begin{subfigure}[b]{0.2\textwidth}
\begin{tikzpicture}[scale=0.75,transform shape]
  \Vertex[x=0,y=4,L=$\pi_2$]{p2}
  \Vertex[x=0,y=6,L=$\pi_1$]{p3}
  \Vertex[x=2,y=4,L=$\rho_3$]{r4}
  \Vertex[x=2,y=5,L=$\rho_2$]{r5}
  \Vertex[x=2,y=6,L=$\rho_1$]{r6}
  \Vertex[x=4,y=4,L=$\rho_{6}$]{r10}
  \Vertex[x=4,y=5,L=$\rho_{5}$]{r11}
  \Vertex[x=4,y=6,L=$\rho_{4}$]{r12}

  \tikzstyle{LabelStyle}=[fill=white,sloped]
//  \tikzstyle{EdgeStyle}=[bend left]
  \tikzstyle{EdgeStyle}=[post]
starts
  \Edge[](p2)(r4)

  \Edge[](p3)(r6)
  
  \Edge[color=red](p2)(r5)

  \Edge[](r4)(r10)

  \Edge[](r5)(r11)
  \Edge[](r5)(r12)
\end{tikzpicture}
\caption{}
\end{subfigure}
\vspace{-1ex}
\caption{Ownership graph before and after executing \mbox{$\pi_1 : \rho_2.\syntax{pass}(\pi_1, \pi_2)$}}
\label{fig:pass-example}
\end{figure}

Rule $\rulename{share}$ allows $\pi$ to share resource $\rho$ with entity $e$, provided $\pi$ is a root of $\rho$ and the sharing does not introduce cycles into the ownership graph. 

\begin{example}
  \figref{fig:share-example} shows an ownership graph before and after executing $\pi_1 : \rho_1.\syntax{share}(\rho_2)$. Note that $\pi_1$ is a root of $\rho_1$ and the insertion of a new edge from $\rho_2 \rightarrow \rho_1$ does not introduce cycles. After completion of the statement, this new edge is introduced into the graph.
\end{example}

\begin{figure}[htb]
\centering
\begin{subfigure}[b]{0.2\textwidth}
\begin{tikzpicture}[scale=0.75,transform shape]
  \Vertex[x=0,y=2,L=$\pi_2$]{p1}
  \Vertex[x=0,y=4,L=$\pi_1$]{p2}
  \Vertex[x=2,y=2,L=$\rho_2$]{r3}
  \Vertex[x=2,y=4,L=$\rho_1$]{r4}
  \Vertex[x=4,y=2,L=$\rho_5$]{r8}
  \Vertex[x=4,y=3,L=$\rho_4$]{r9}
  \Vertex[x=4,y=4,L=$\rho_{3}$]{r10}

  \tikzstyle{LabelStyle}=[fill=white,sloped]
//  \tikzstyle{EdgeStyle}=[bend left]
  \tikzstyle{EdgeStyle}=[post]
  \Edge[](p1)(r3)

  \Edge[](p2)(r4)

  \Edge[](r3)(r8)
  \Edge[](r3)(r9)

  \Edge[](r4)(r10)
\end{tikzpicture}
\caption{}
\end{subfigure}
\qquad
\begin{subfigure}[b]{0.2\textwidth}
\begin{tikzpicture}[scale=0.75,transform shape]
  \Vertex[x=0,y=2,L=$\pi_2$]{p1}
  \Vertex[x=0,y=4,L=$\pi_1$]{p2}
  \Vertex[x=2,y=2,L=$\rho_2$]{r3}
  \Vertex[x=2,y=4,L=$\rho_1$]{r4}
  \Vertex[x=4,y=2,L=$\rho_5$]{r8}
  \Vertex[x=4,y=3,L=$\rho_4$]{r9}
  \Vertex[x=4,y=4,L=$\rho_{3}$]{r10}

  \tikzstyle{LabelStyle}=[fill=white,sloped]
//  \tikzstyle{EdgeStyle}=[bend left]
  \tikzstyle{EdgeStyle}=[post]
  \Edge[](p1)(r3)

  \Edge[](p2)(r4)

  \Edge[](r3)(r8)
  \Edge[](r3)(r9)
  \Edge[color=red](r3)(r4)

  \Edge[](r4)(r10)
\end{tikzpicture}
\caption{}
\end{subfigure}
\vspace{-1ex}
\caption{Ownership graph before and after executing \mbox{$\pi_1 : \rho_1.\syntax{share}(\rho_2)$}}
\label{fig:share-example}
\end{figure}

Rule $\rulename{release}$ allows $\pi$ to release $e$'s ownership over $\rho$, provided $\pi$ is a root of $e$ and $e$ owns $\rho$.
Rule $\rulename{spawn}$ allows $\pi$ to spawn a fresh process $\pi'$ to execute statements $s$. Finally, rule $\rulename{allocate}$ allows $\pi$ to allocate a fresh resource $\rho$ owned by $e$.

\subsection{Rules preserve the graph properties}

We first establish that the inference system of \figref{fig:inference} is sound in the sense that the inference rules transform ownership graphs in a valid way.

\begin{lemma}
  Let $P$ be a process and $\sigma$ an ownership graph, and suppose that $\langle P,\sigma \rangle \rightarrow \langle P',\sigma' \rangle$. Then $\sigma'$ is an ownership graph.
\end{lemma}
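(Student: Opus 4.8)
The plan is to verify that each of the three defining conditions of an ownership graph --- (P) no process has an incoming edge, (R) every resource has at least one incoming edge, and (A) acyclicity --- is preserved by a single transition step. I would proceed by induction on the derivation of $\langle P,\sigma\rangle \rightarrow \langle P',\sigma'\rangle$ and case analysis on the last inference rule applied. The rules $\rulename{par}_1$ and $\rulename{par}_2$ do not touch the graph themselves but reduce to a sub-transition on a component process; for these the claim follows immediately from the induction hypothesis applied to the premise. The rules $\rulename{read}$, $\rulename{write}$ and $\rulename{spawn}$ all leave the graph unchanged ($\sigma'=\sigma$), so there is nothing to prove. This reduces the argument to the four rules that genuinely modify the edge set: $\rulename{allocate}$, $\rulename{share}$, $\rulename{pass}$ and $\rulename{release}$.

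For the two rules that only add edges I would argue as follows. In $\rulename{allocate}$ the new edge $e \rightarrow \rho$ targets a freshly created resource $\rho$: since $\rho$ is a resource, (P) is untouched; $\rho$ now carries the incoming edge required by (R) while every pre-existing resource keeps its edges; and because $\rho$ is fresh it has no outgoing edges, so no cycle can run through it and (A) follows without appeal to a side condition. In $\rulename{share}$ the added edge $e \rightarrow \rho$ likewise points at the resource $\rho$, so (P) holds and (R) is immediate because adding edges never strips a resource of an incoming edge; acyclicity (A) is here handed to us directly by the side condition $\acyclic(\sigma')$.

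The rules $\rulename{pass}$ and $\rulename{release}$ remove an edge, and this is where condition (R) needs care. For $\rulename{pass}$, deleting $e \rightarrow \rho$ and inserting $e' \rightarrow \rho$ only alters edges whose head is $\rho$; the inserted edge keeps $\rho$ supplied with an incoming edge, no other resource is affected, the new edge again points at a resource so (P) is safe, and (A) is supplied by the side condition. For $\rulename{release}$, (P) and (A) are automatic, since deleting an edge can neither create an incoming edge into a process nor close a cycle. The only real obligation is (R): deleting $e \rightarrow \rho$ could in principle leave $\rho$ without any owner. Here the side condition of $\rulename{release}$ is exactly what saves us --- it demands a second owner $e'$ with $\{e' \rightarrow \rho\}\subseteq\sigma$ and $e' \neq e$, so after the deletion $\rho$ still retains the edge $e' \rightarrow \rho$.

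I expect the $\rulename{release}$ case, and specifically preserving (R), to be the only subtle point: it is the one place where an edge disappears and a resource could be orphaned, and the distinctness requirement $e' \neq e$ in the rule's premise is precisely the hypothesis that prevents this. Everywhere else acyclicity is either granted by an explicit $\acyclic(\sigma')$ premise or follows from freshness, and (P) is uniform across all modifying rules because every inserted edge has a resource as its head.
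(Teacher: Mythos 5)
Your proposal is correct and follows essentially the same route as the paper's proof: a case analysis over the inference rules, dispatching the non-modifying rules immediately, using the $\acyclic(\sigma')$ premises for (A) in $\rulename{pass}$ and $\rulename{share}$, freshness for $\rulename{allocate}$, and the $e' \neq e$ premise to preserve (R) in $\rulename{release}$. Your explicit appeal to the induction hypothesis for $\rulename{par}_1$ and $\rulename{par}_2$ is slightly more careful than the paper's phrasing but not a different argument.
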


\begin{proof}
We proceed by induction over the inference rules of SOM.

\noindent {$\rulename{par}_1$, $\rulename{par}_2$, $\rulename{read}$, $\rulename{write}$, $\rulename{spawn}$:}
These rules do not modify the graph and therefore preserve the graph properties.
\vspace{1ex}

\noindent {$\rulename{pass}$:}
Hence $\sigma' = (\sigma \setminus \{e \rightarrow \rho\}) \cup\{e' \rightarrow \rho\}$. The target $\rho$ of the new edge is not a process, therefore the property (P) of Definition~\ref{def:rcg} is preserved. The target $\rho$ of the old edge is the same as of the new edge, preserving (R). Furthermore, from the premise of the rule we have $\acyclic(\sigma')$, ensuring (A). 
\vspace{1ex}

\noindent {$\rulename{share}$:}
Hence $\sigma' = \sigma \cup\{e \rightarrow \rho\}$. The target $\rho$ of the new edge is not a process, therefore (P) is preserved. Adding an edge to a resource always preserves (R). Furthermore, from the premise of the rule we have $\acyclic(\sigma')$, ensuring (A).
\vspace{1ex}

\noindent {$\rulename{release}$:}
Hence $\sigma' = \sigma \setminus \{e \rightarrow \rho\}$. The removal of an edge trivially preserves (P) and (A). Property (R) is preserved because the premise $\{e' \rightarrow \rho\} \subseteq \sigma$, with $e' \neq e$, states that another edge with the same target $\rho$ has to exist in $\sigma$, and hence also in $\sigma'$.
\vspace{1ex}

\noindent {$\rulename{allocate}$:}
Hence $\sigma' = \sigma \cup \{e \rightarrow \rho\}$ where $\rho$ is fresh. This trivially preserves (P). Property (R) is preserved because the new node $\rho$ has an incoming edge. Property (A) is preserved because the new node $\rho$ has no outgoing edges. 

\end{proof}

\subsection{Freedom from data races}
\label{sec:race-freedom}

The main property of SOM is freedom from data races. As demonstrated in \secref{sec:integration}, this property can be leveraged in a dynamic checking technique for safe resource access.

The occurrence of a data race is typically defined, e.g. in~\cite{savage1997eraser}, as follows: 
a \emph{data race} occurs when two processes access a shared resource and when
\begin{itemize}
\item at least one access is a write, and
\item the processes use no explicit mechanism to prevent the accesses from being simultaneous.
\end{itemize}

\noindent Within our formal framework, this definition can be expressed as follows:
\begin{definition}[Data race]
\label{def:race}
A \emph{data race} on a resource $\rho$ occurs when there exist processes $\pi_1$ and $\pi_2$, $\pi_1 \neq \pi_2$, and a state $\sigma$ such that $\pi_1: \rho.\syntax{write}$ is reducible in $\sigma$ and, in addition, either $\pi_2: \rho.\syntax{write}$ or $\pi_2: \rho.\syntax{read}$ is reducible \mbox{in $\sigma$.}
\end{definition}

\begin{theorem}
SOM is free of data races.
\end{theorem}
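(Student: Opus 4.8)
The plan is to refute the existential in Definition~\ref{def:race} directly, by showing that the side conditions required to reduce $\pi_1 : \rho.\syntax{write}$ and to reduce either a read or a write by a \emph{distinct} process $\pi_2$ are mutually inconsistent. I would argue by contradiction: assume a data race on some resource $\rho$ exists, so that there are distinct processes $\pi_1 \neq \pi_2$ and a state $\sigma$ in which $\pi_1 : \rho.\syntax{write}$ is reducible and, simultaneously, one of $\pi_2 : \rho.\syntax{write}$ or $\pi_2 : \rho.\syntax{read}$ is reducible.

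First I would read off what reducibility of $\pi_1 : \rho.\syntax{write}$ forces. The only applicable rule is $\rulename{write}$, whose premise is $\{\pi_1\} = \owner_\sigma(\rho)$; hence the set of root owners of $\rho$ in $\sigma$ is exactly the singleton $\{\pi_1\}$.

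Then I would dispatch the two disjuncts of the race hypothesis. If $\pi_2 : \rho.\syntax{read}$ is reducible, rule $\rulename{read}$ requires $\pi_2 \in \owner_\sigma(\rho)$; since $\owner_\sigma(\rho) = \{\pi_1\}$, this yields $\pi_2 = \pi_1$, contradicting $\pi_1 \neq \pi_2$. If instead $\pi_2 : \rho.\syntax{write}$ is reducible, rule $\rulename{write}$ requires $\{\pi_2\} = \owner_\sigma(\rho) = \{\pi_1\}$, again forcing $\pi_2 = \pi_1$. Either way the contradiction is immediate, so no data race can arise.

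I expect no serious obstacle: the proof is local, uses no induction over reduction sequences, and never inspects the graph beyond the value of $\owner_\sigma(\rho)$. The one point genuinely worth pinning down is the asymmetry deliberately engineered into the two access rules --- $\rulename{write}$ demands that the acting process be the \emph{sole} root of $\rho$ (an equality of $\owner_\sigma(\rho)$ with a singleton), whereas $\rulename{read}$ demands only membership in $\owner_\sigma(\rho)$. The whole argument hinges on confirming that these premises are precisely what reducibility of the respective statements entails, and that a singleton root set is incompatible with any second, distinct process also being a root.
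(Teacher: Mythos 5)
Your proposal is correct and follows essentially the same route as the paper's proof: both arguments extract the premise $\{\pi_1\} = \owner_\sigma(\rho)$ from reducibility of the write, then observe that reducibility of $\pi_2$'s write or read would force $\pi_2 \in \owner_\sigma(\rho) = \{\pi_1\}$, contradicting $\pi_1 \neq \pi_2$. The only difference is presentational (you phrase it as an explicit contradiction), so nothing further is needed.
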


\begin{proof}
In order for $\pi_1: \rho.\syntax{write}$ to be reducible in a state $\sigma$, it has to fulfill the premise of the $\rulename{write}$ rule, which equals to $\{\pi_1\} = \owner_\sigma(\rho)$. 
Furthermore, in order for $\pi_2: \rho.\syntax{write}$ to be reducible in a state $\sigma$, it has to fulfill  $\{\pi_2\} = \owner_\sigma(\rho)$; in order for $\pi_2: \rho.\syntax{read}$ to be reducible in a state $\sigma$, it has to fulfill the premise of the $\rulename{read}$ rule, which equals to $\pi_2 \in \owner_\sigma(\rho)$. 
We conclude in both cases that no data race can occur as $\pi_2 \not \in \{\pi_1\} = \owner_\sigma(\rho)$.
\end{proof}

\section{Dynamic checking of ownership violations}
\label{sec:integration}

SOM can be combined with a regular programming language to provide a dynamic technique to find ownership violations. Since a checking technique may not change the semantics of the program, a thread does not wait until the premise of a SOM statement is met. Instead, it treats the premise as an assertion, triggering an error if it is not met.

The approach comprises the following steps:
\begin{enumerate}
\item Add SOM statements to the programming language.
\item Augment existing language instructions with the new SOM statements.
\item Adapt synchronization mechanisms to issue SOM statements.
\item \begin{enumerate}
\item In development mode, run the SOM model together with the program, raising errors if the model cannot take a step.
\item In production mode, ignore all SOM statements when running.
\end{enumerate}
\end{enumerate}
This section describes each step. We use an object-oriented programming language (Java) for integration of SOM statements as it enables us to map objects to resources in the ownership graph. 

\subsection{Adding SOM statements}
\label{sec:instructions}

All SOM statements, except for $\syntax{spawn}$, are implemented as methods of all \emph{regular classes}, i.e.\ in a common superclass. \emph{Non-regular classes} are synchronization mechanisms (locks, channels, etc.), value types, immutable types and legacy code that is not ready for SOM; non-regular classes are not checked using SOM and therefore do not need to be modified. Objects of non-regular types cannot serve as owners, with the exception of adapted synchronization mechanisms (see \secref{sec:synchronization}).
The $\syntax{spawn}$ statement, in programming languages similar to Java with a thread class and a separate method to start the thread, can be introduced as a hidden static method of the thread class. 

All these methods serve as hooks for the simulation environment and do not change the state of the program. The $\syntax{pass}$ statement is the only statement directly available to the programmer, all others are hidden and only triggered by regular instructions (section \ref{sec:mapping}) or synchronization mechanisms (section \ref{sec:synchronization}).

\begin{example}
Listing \ref{lst:llsplit} shows when to use the \java{pass} statement in real code: a linked list is split, which requires that the nodes which belong to the new linked list are also owned by it. This is achieved while iterating through the list up to the split point. In this case, the \java{pass} statement does not pass ownership to another process, but it reorganizes ownership within the part of the sub-graph owned by the current process. The code does not feature any synchronization mechanisms, however, if multiple threads attempt to split the same list, the threads without write access trigger an error.
\end{example}

\begin{Listing}[t]
\vspace{-3ex}
\begin{lstlisting}
class LinkedList {
	protected LLNode first;
	public LinkedList splitBefore(int index) {
		LinkedList result = new LinkedList();
		result.first = first;
		//Transfer ownership of first node to the new list
		result.first.pass(this, result);
		//Find split point
		LLNode cur = first;
		for (int i = 1; i < index; i++) {
			cur = cur.next;
			//Transfer ownership of current node to the result
			cur.pass(this, result);
		}
		first = cur.next;
		//Make the cut, all further nodes are not affected
		cur.next = null;
		return result;
	}
}
\end{lstlisting}
\vspace{-2ex}
\caption{Splitting a linked list}
\label{lst:llsplit}
\end{Listing}

\subsection{Augmenting language instructions}
\label{sec:mapping}

This section explains when a regular language instruction is augmented by a SOM statement. \tabref{tab:mapping} gives a summary, showing the Java instructions or expressions on the left and the additionally executed SOM statements on the right. The SOM statements are invoked immediately before the regular instruction.

\begin{savenotes}
\begin{table}[htb]
{\small
\begin{center}
\begin{tabular}{l|l}
\textbf{Program code} & \textbf{SOM statements}\\
\hline
\sjava{... object.fieldname ...} & $\mathrm{object}.\syntax{read}$ \\
\hline
\sjava{object.fieldname = ...;} & $\mathrm{object}.\syntax{write}$ \\
\hline
\sjava{o = new RegularClass(...);} & $o := $ \sjava{this}$.\syntax{allocate}$ \\
\hline
\sjava{((Thread)t).start();} & $\pi:=\syntax{spawn(s)};$\footnote{Here, and in following uses of \syntax{spawn}, the statement $s$ contains all SOM statements that the thread \syntax{t} is issuing.}\\
                            & $t.\syntax{pass}($\sjava{this}$,\pi)$ \\
\end{tabular}
\end{center}
}
\vspace{-2ex}
\caption{Augmenting language instructions}
\label{tab:mapping}
\end{table}
\end{savenotes}

\paragraph{Field access.}
Retrieving the value of a field of an object is accompanied with a SOM \texttt{read} statement targeted to the object. Setting a field is similar, but issues a \texttt{write}. We therefore insert an \java{object.read} before every reading field access and an \java{object.write} before every writing access.

\paragraph{Creating a new object.}
\label{sec:firstrecowns}
Whenever the program creates a new object from a regular class, an \texttt{allocate} statement is issued. The target of this statement is a staging object which owns all new objects created by the current process. When an object owned by the staging object is assigned to a field, the field's owner becomes the owner of the object. We call this the \emph{first receiver owns} principle. This enables factories and other common patterns without the need of explicit \texttt{pass} statements. It is important that the \texttt{allocate} statement is issued before the constructor of the new object is run, otherwise the constructor would raise false ownership violation alarms.

\paragraph{Spawning threads and allocating objects.}
In Java, the constructor of a thread is executed by the original thread, which has to be respected: before a thread is running, it is a regular object and therefore has an owner. Only executing the \java{start()}-Method actually creates a new thread, at which point the resource that formerly denoted the thread object becomes a process in the SOM model.

We model this behavior by declaring that the \java{Thread} object is just a regular object with the process it represents later being transparent. When the \java{start()} method is executed, \syntax{spawn} is executed and the resulting process is given the thread object. This enables initialization code to be run by the thread that creates the \java{Thread} object. It also means that every thread will only directly own one resource. All other resources owned by the thread are owned indirectly through this one resource.

\subsection{Adapting synchronization mechanisms}
\label{sec:synchronization}

This section explains basic synchronization mechanisms and how they interact with the SOM model. Synchronization mechanisms are not considered regular objects, so they are not represented as resources in the ownership graph. Instead, they are represented as processes because synchronization mechanisms must be able to acquire ownership over resources and are not owned themselves. 

Therefore the model considers all code that is executed within the context of the synchronization mechanism as being executed by its representing process, even if the actual code is executed by the calling thread. 

In this section, we sometimes refer to the special variable \java{caller} which denotes the object owning the context in which the call statements originated, i.e.\ the \java{this} variable of the previous stack frame.

\paragraph{Channels.}
\label{sec:channels}

\tabref{tab:channels} shows how the operations on channels are expressed in SOM. 

\begin{savenotes}
\begin{table}[htb]
\begin{center}
{\small
\begin{tabular}{l|l}
\textbf{Program code} & \textbf{SOM statements} \\ 
\hline 
\sjava{c = new Channel();} & $c := \syntax{spawn}(s)$ \\
\hline 
\sjava{c.send(o);} & $o.\syntax{pass}($\sjava{this}$, c)$ \\
\hline
\sjava{o = c.receive();} & $o.\syntax{pass}(c, \mathrm{caller})$\footnote{$\syntax{pass}$ is issued by $c$}\\
\end{tabular}
}
\end{center}
\vspace{-2ex}
\caption{Channels}
\label{tab:channels}
\end{table}
\end{savenotes}

Passing a message (essentially an object) through a channel transfers ownership from the sender to the receiver. Therefore, a thread can only put objects it owns into channels, thereby releasing ownership over them. A process calling the \java{send(message)} method automatically passes ownership over the message from the current object to the channel using a SOM $\syntax{pass}$ statement. A process receiving a message from a channel through the \java{receive()} method also receives ownership over it: the channel passes ownership over the object \java{o} using the SOM $\syntax{pass}$ statement to the object in whose context the \java{receive()} was executed.

\begin{Listing}[htb]
\vspace{-3ex}
\begin{lstlisting}
class Stage implements Runnable {
	private final Channel<WorkItem> input, output;
	public Stage(Channel<WorkItem> input, Channel<WorkItem> output) {
		this.input = input;
		this.output = output;
	}
	public void run() {
		while (true) {
			WorkItem p = input.receive();
			p.data = p.data + 1;
			output.send(p);
		}
	}
}
\end{lstlisting}
\caption{Pipeline using channels}
\label{lst:pipeline}
\vspace{-2ex}
\end{Listing}

\begin{example}
\label{sec:pipelines}
Listing \ref{lst:pipeline} shows the code required for a pipeline stage. The pipeline stages are connected using channels. 

We  assume a pipeline with $n$ stages handled by threads denoted as $\pi_1$ to $\pi_n$ and look at two threads, $\pi_k$ and $\pi_{k+1}$, with $1<k<n-2$. The channels connecting the stages are denoted as $c_{l}$, where $c_{l}$ connects $\pi_l$ and $\pi_{l+1}$. Work items $\rho$ are numbered from $1$ to $m$. We look at two work items $\rho_i$ and $\rho_{i+1}$, with $1 \le i < m$. 

We prefix the thread identifier $\pi$ before every instruction both in the Java code and the SOM. We use $\tau$ to denote a (thread local) stack value. The trace of a possible execution is shown in \tabref{tab:channelexec}.
\end{example}

\begin{table}[htb]
\centering
{\small
\begin{tabular}{@{}l@{}|@{\hspace{2pt}}l@{}}
\textbf{Program code} & \textbf{SOM statements} \\
\hline
$\pi_k$: $\rho_i$ = input.receive() & $c_{k-1}: \rho_{i}.\syntax{pass}(c_{k-1},\pi_k)$\\
$\pi_k$: $\tau$ = $\rho_i$.data & $\pi_k: \rho_i.\syntax{read}$ \\
$\pi_k$: $\rho_i$.data = $\tau$ + 1& $\pi_k: \rho_i.\syntax{write}$ \\
$\pi_k$: output.send($\rho_i$) & $\pi_k: \rho_i.\syntax{pass}(\pi_{k},c_{k})$ \\
$\pi_{k+1}$: $\rho_i$ = input.receive() & $c_{k}: \rho_{i}.\syntax{pass}(c_{k},\pi_{k+1})$\\
$\pi_k$: $\rho_{i+1}$ = input.receive() & $c_{k-1}: \rho_{i+1}.\syntax{pass}(c_{k-1},\pi_k)$\\
$\pi_k$: $\tau$ = $\rho_{i+1}$.data & $\pi_k: \rho_{i+1}.\syntax{read}$ \\
$\pi_{k+1}$: $\tau$ = $\rho_i$.data & $\pi_{k+1}: \rho_i.\syntax{read}$ \\
$\pi_k$: $\rho_{i+1}$.data = $\tau$ + 1& $\pi_k: \rho_{i+1}.\syntax{write}$ \\
$\pi_{k+1}$: $\rho_i$.data = $\tau$ + 1& $\pi_{k+1}: \rho_i.\syntax{write}$ \\
$\pi_{k+1}$: output.send($\rho_i$) & $\pi_{k+1}: \rho_i. \syntax{pass}(\pi_{k+1}, c_{k+1})$ \\
$\pi_k$: output.send($\rho_{i+1}$) & $\pi_k: \rho_i.\syntax{pass}(\pi_k, c_k)$ \\
$\pi_{k+1}$: $\rho_{i+1}$ = input.receive() & $c_{k}: \rho_{i+1}.\syntax{pass}(c_{k},\pi_{k+1})$\\
$\pi_{k+1}$: $\tau$ = $\rho_{i+1}$.data & $\pi_{k+1}: \rho_{i+1}.\syntax{read}$ \\
$\pi_{k+1}$: $\rho_{i+1}$.data = $\tau$ + 1& $\pi_{k+1}: \rho_{i+1}.\syntax{write}$ \\
$\pi_{k+1}$: output.send($\rho_{i+1}$) & $\pi_{k+1}: \rho_{i+1}.\syntax{pass}(\pi_{k+1}, c_{k+1})$
\end{tabular}
}
\caption{Pipeline example: execution trace}
\label{tab:channelexec}
\end{table}

\paragraph{Queues.}
\label{sec:queues}
A queue can also be used to synchronize threads. In this case, messages residing inside the queue are not accessible to any thread. A thread peeking at the head of the queue gets a reference to the object but may not access its state. Except for the asynchronous behavior, a queue can be expressed in SOM similarly to a channel, see \tabref{tab:queues}.

\begin{savenotes}
\begin{table}[htb]
\begin{center}
{\small
\begin{tabular}{l|l}
\textbf{Program code} & \textbf{SOM statements} \\ 
\hline 
\sjava{q = new Queue();} & $q :=$ \sjava{this}$.\syntax{spawn}(s)$ \\
\hline 
\sjava{q.add(o);} & $o.\syntax{pass}($\sjava{this}$, q)$ \\
\hline
\sjava{o = q.receive();} & $o.\syntax{pass}(q, \mathrm{caller})$\footnote{$\syntax{pass}$ is issued by $q$}\\
\end{tabular}
}
\end{center}
\vspace{-2ex}
\caption{Queues}
\label{tab:queues}
\vspace{-2ex}
\end{table}
\end{savenotes}

\paragraph{Locks.}
\label{sec:locks}
Monitors, mutexes and semaphores are called locks within SOM. A lock is permanently associated with a resource upon creation of the lock, similar to monitors \cite{Hoare:1974:MOS:355620.361161}. It differs from a common monitor as it is perfectly possible to have multiple locks protecting the same object: it suffices if at most one of these locks is not locked.

We differentiate between two kinds of locks: regular locks based on monitors and binary semaphores. The latter allow the thread releasing the lock to be different from the thread acquiring it. On the other hand, monitors allow the usage of condition variables. \tabref{tab:locks} shows how lock operations are expressed in SOM.

The \java{lock()} method acquires the lock and transfers ownership of the object from the lock to the caller using $\syntax{pass}$. 
The \java{unlock()} method releases the lock and transfers ownership over the content from the caller to the lock using $\syntax{pass}$, therefore the caller needs to have ownership of the resource prior to the unlocking operation. 
Releasing and acquiring the lock because of a condition variable works exactly the same as releasing and acquiring the lock directly, including all the resource passing. 

\begin{savenotes}
\begin{table}[htb]
\begin{center}
{\small
\begin{tabular}{l|l}
\textbf{Program code} & \textbf{SOM statements} \\ 
\hline
\sjava{l = new Lock(o);} & $l := \syntax{spawn}(s);$ \\
& $o.\syntax{pass}($\sjava{this}$,l)$ \\
\sjava{c = l.newCondition();} & \\
\hline 
\sjava{l.lock();} & $o.\syntax{pass}(l,\mathrm{caller})$\footnote{$\syntax{pass}$ is issued by $l$}\\
\hline
\sjava{l.unlock();} & $o.\syntax{pass}($\sjava{this}$,l)$ \\
\hline
\sjava{c.wait();} & $o.\syntax{pass}($\sjava{this}$,l)$ \\
``received lock again" & $o.\syntax{pass}(l,\mathrm{caller})$\footnote{$\syntax{pass}$ is issued by $l$}\\
\sjava{c.signal();}\footnote{We assume a signal-and-continue signaling discipline} &\\
\end{tabular}
}
\end{center}
\vspace{-2ex}
\caption{Locks}
\label{tab:locks}
\vspace{-2ex}
\end{table}
\end{savenotes}

%
%
%
%
%
%

\begin{example}
\label{sec:pingpong}
The code in listing \ref{lst:pingpong} shows two binary semaphores both holding the same object. According to the semantics of locks, it follows that only one lock may legally be unlocked at any point in time. In the initial state, \java{ping} is locked and \java{pong} is unlocked. The two threads repeatedly unlock one of the locks and then lock the other one. It suffices that the unlocking thread currently owns the object.
\end{example}

\begin{Listing}[t]
\vspace{-3ex}
\begin{lstlisting}
class PingPong {
	public PingPong() {
		Ball b = new Ball();
		BinSemaphore<Ball> ping = new BinSemaphore<>(b);
		//ping owns b, we want the ownership back, so we need to lock ping
		ping.lock();
		BinSemaphore<Ball> pong = new BinSemaphore<>(b);
		//pong owns b
		(new Player(ping, pong)).start();
		//This player will lock pong, starting the game
		(new Player(pong, ping)).start();
	}
}
class Player extends SafeThread {
	public final BinSemaphore<Ball> ping, pong;	
	public Player(BinSemaphore<Ball> ping, pong) {
		this.ping = ping;
		this.pong = pong;
	}
	void run() {
		while (true) {
			//Take the ball from the lock
			ping.lock();
			//Let it bounce once
			ping.getResource().bounce();
			//Pass the ball to the other player
			pong.unlock();
		}
	}
}
class Ball {
	int bounces = 0;
	public void bounce() { bounces++; }
}
\end{lstlisting}
\vspace{-2ex}
\caption{Locks: Ping Pong example}
\label{lst:pingpong}
\end{Listing}

%
%
%
%
%
%
%
%

\paragraph{Readers-writer locks.}
\label{sec:rwlocks}
We introduce a readers-writer lock similar to the regular locks. In addition to regularly acquiring and releasing the lock, it also supports acquiring for reading. We use a proxy object \java{p} to manage multiple owners. 

\tabref{tab:rwlocks} shows how the operations on readers-writer locks are expressed in SOM. Calling \java{lockWrite()} acquires the write lock and makes the lock pass the protected object from the proxy to the caller. Calling \java{unlockWrite()} releases the write lock and uses $\syntax{pass}$ to transfer ownership from the caller to the proxy. Calling \java{lockRead()} acquires the read lock and makes the lock add the caller to the owners of the proxy using $\syntax{share}$. Calling \java{unlockRead()} releases the read lock and removes the caller from the list of owners of the proxy using $\syntax{release}$. The usage of the proxy object enables us to only have a single class that needs to handle multiple owners (see section \ref{sec:multicontroller}).

\begin{savenotes}
\begin{table}[htb]
\begin{center}
{\small
\begin{tabular}{l|l}
\textbf{Program code} & \textbf{SOM statements} \\ 
\hline
\sjava{l = new ReadWriteLock(o);} & $l :=$ $\syntax{spawn}(s);$  \\
& $p := l.\syntax{allocate};$ \\
& $o.\syntax{pass}($\sjava{this}$,p)$ \\
\hline 
\sjava{l.lockWrite();} & $o.\syntax{pass}(p, \mathrm{caller})$\footnote{$\syntax{pass}$ is issued by $l$}\\
\hline
\sjava{l.unlockWrite();} & $o.\syntax{pass}($\sjava{this}$, p)$ \\
\hline 
\sjava{l.lockRead();} & $p.\syntax{share}(\mathrm{caller})$\footnote{$\syntax{share}$ is issued by $l$}\\
\hline 
\sjava{l.unlockRead();} & $p.\syntax{release}($\sjava{this}$)$ \\
\end{tabular} 
}
\end{center}
\vspace{-2ex}
\caption{Readers-writer locks}
\label{tab:rwlocks}
\end{table}
\end{savenotes}

\subsection{Running SOM-augmented programs}
\label{sec:simulation}

\paragraph{Development.}
At development time, it is important to know if the program improperly accesses shared objects. When the program executes a SOM statement whose premise is not fulfilled, an \emph{ownership violation} occurs. An ownership violation can have two reasons:
\begin{enumerate}
\item A resource is improperly accessed.
\item A statement modifying the ownership graph violates the graph properties.
\end{enumerate}

Detecting ownership violations is achieved by simulating the SOM model together with a run of the program. The simulation is synchronized on the calls to the SOM statements introduced into the programming language in \secref{sec:instructions}. 
If the SOM model is unable to take a step together with the program, the program contains an ownership violation. 
The simulation is \emph{complete} in the sense that all errors reported are actual ownership violations. 
As with most dynamic techniques, the simulation is \emph{not sound}. Code that is not executed is not checked, and it is possible that certain interleavings can lead to ownership violations being missed. 

\paragraph{Deployment.}
At deployment time, the program can be run without simulating the SOM model by ignoring the SOM statements. This removes the overhead of the simulation and thereby improves performance.

\section{Implementation and performance analysis}
\label{sec:performanceanalysis}

This section describes implementation details of the integration of SOM into Java, and benchmark results.

\subsection{Java SOM implementation}
\label{sec:implementation}

The SOM simulation can be integrated tightly into the Java runtime, i.e.\ the simulation code is implemented directly in the methods that have been introduced for SOM. This enhances performance and can make it easier to locate errors, as information such as stack traces exist. By using the assertion facility offered by Java, we can even focus on parts of a program (assertions may be enabled for specific classes).

\paragraph{Representation of the ownership graph.}
The ownership graph does not require its own data structure, but can be represented using regular objects by appending a field pointing to the owner(s). This increases the performance of the simulation considerably. This approach also keeps the representation of the graph small, as the memory management takes care of removing vertices no longer used.

\paragraph{Restriction to single owners.}
\label{sec:multicontroller}
By providing a separate class that can have multiple owners, regular classes only need a single field for the owner. A readers-writer lock (see section \ref{sec:rwlocks}) can then use this special class to manage the sharing. 

This has also a positive effect on usability: to identify the exact edge to redirect, the $\syntax{pass}$ statement of SOM takes the old and the new owners as arguments. When only one special class has multiple owners, we can drop the first argument of $\syntax{pass}$ for all regular classes, and are still able to identify the exact edge to modify. This simplifies the usage of synchronization mechanisms: it is no longer necessary for the caller to own the object, it suffices if the current thread has the necessary ownership.

\paragraph{Late checking of graph properties.}
The actual check for the acyclicity of the ownership graph can be dropped since it will show up during the depth-first search for the root owner. Cycles in the graph result in infinite searches, and an empty owner set results in an immediate failure. This optimization trades the information of the exact position where a graph became degenerated for vastly increased performance. 

\paragraph{Prototype.}
The current prototype for our implementation is freely available for download\footnote{SOM homepage: \url{http://se.inf.ethz.ch/people/schill/som}}.
The prototype implements SOM checking for Java using bytecode transformation. The support comes in form of a library, which provides the synchronization mechanisms explained in \secref{sec:synchronization} as well as wrappers for some important classes of the Java base library.


The transformer is informed either through command line arguments or through annotations which classes to transform. Threads are automatically considered processes. We have a class called \java{SharedReader} which can handle multiple owners, all other classes can only have one owner for efficiency and convenience. At the moment, \java{SharedReader} is only used for the readers-writer lock.

The transformer injects the SOM methods when the class is loaded, implementing the \java{Resource} and \java{Process} interfaces. Since the methods are not available at compile time and in production environments, the \texttt{pass} operation is available as a static method of the \java{SOM} class.
\subsection{Benchmark results}
\label{sec:performance}

Simulating SOM incurs an overhead as it requires checks whenever a field of an object is accessed. Our primary goal was to make sure that a program which uses the production version is not significantly slower than a regular program. Our secondary goal was to have a reasonable overhead when SOM is activated, such that it is practical to use during development. For the second goal we implemented the checks so that they can be disabled on a class-by-class basis, using the regular mechanisms for Java assertions.

We created three benchmarks to assess the performance of our implementation. The benchmarks have been executed with full, partial and no simulation of the SOM model, as well as a version that uses plain Java. In the legends of Figures~\ref{fig:performancepp} and~\ref{fig:performancesort}, these benchmark configurations are denoted as follows:
\begin{itemize}
\item \emph{Full}: Full SOM simulation.
\item \emph{Partial}: Constructing the ownership graph, no access checks.
\item \emph{None}: Ignoring all SOM statements.
\item \emph{Base}: Pure Java with locks.
\end{itemize}
Partial simulation disables all $\syntax{read}$ and $\syntax{write}$ statements, whereas no support uses our production library that strips out all SOM statements. We used a quad core machine to run the benchmarks, taking the average time of 100 runs in all cases.

\paragraph{Synchronization overhead.}
To asses the impact of SOM on the synchronization mechanisms, we ran the Ping Pong example of Listing~\ref{lst:pingpong}: in quick succession, two threads lock and unlock two semaphores that protect a single object.

The results of this benchmark are shown in \figref{fig:performancepp} and \tabref{tab:performancepp}. Since the majority of the SOM code in this example concerns passing ownership, the difference between full and partial support is minimal. Furthermore, the difference between the production library and our base line is negligible and can be attributed to abstraction of locks, fulfilling our primary goal. Overall the impact of SOM on this example is small.

\begin{figure}[t]
  \centering
\begin{tikzpicture}
\begin{axis}[scaled ticks=false,xlabel=Locking/Unlocking pairs,ylabel=Time (milliseconds),height=4.8cm,width=\columnwidth,legend pos=north west]
\addplot plot[error bars/.cd, y dir=both,y explicit] table [x=parameter, y=time, col sep=comma] {pp.full.csv};
\addlegendentry{Full}
\addplot plot[error bars/.cd, y dir=both,y explicit] table [x=parameter, y=time, col sep=comma] {pp.light.csv};
\addlegendentry{Partial}
\addplot plot[error bars/.cd, y dir=both,y explicit] table [x=parameter, y=time, col sep=comma] {pp.none.csv};
\addlegendentry{None}
\addplot plot[error bars/.cd, y dir=both,y explicit] table [x=parameter, y=time, col sep=comma] {pp.base.csv};
\addlegendentry{Base}
\end{axis}
\end{tikzpicture}
  \vspace{-5ex}
  \caption{Ping Pong benchmark: performance of locks}
  \label{fig:performancepp}
\end{figure}
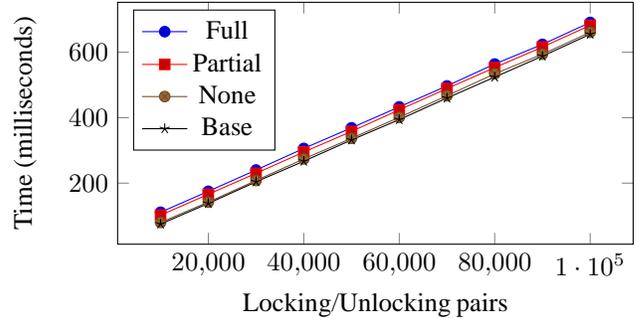

\paragraph{Field access overhead.}
Our second benchmark is concerned with the overhead the SOM checks have on field accesses.  This also gives an impression on how much speedup we can gain if checks in uninteresting parts of the program are disabled. The benchmark is running a single-threaded Quicksort algorithm; it is single-threaded so that we do not measure the time needed for synchronization, which is the main concern of our first benchmark.

The results of the sorting benchmark are shown in \figref{fig:performancesort} and  \tabref{tab:performancesort}. In this case, the difference between activated (Full) and deactivated (Partial) assertions is noticeable, while the difference between Partial and None remains small. 
But even with all the SOM checks switched on, the program takes less than 2.3 times longer than the base version for all checked array sizes, fulfilling our secondary performance goal. As with the last benchmark, the difference between our production library and the baseline is negligible, fulfilling our primary performance goal.

\begin{figure}[t]
  \centering
\begin{tikzpicture}
\begin{axis}[scaled ticks=false,xlabel=Size of the Array,ylabel=Time (milliseconds),height=4.8cm,width=\columnwidth,legend pos=north west]
\addplot plot[error bars/.cd, y dir=both,y explicit] table [x=parameter, y=time, col sep=comma, y error=error] {sort.full.csv};
\addlegendentry{Full}
\addplot plot[error bars/.cd, y dir=both,y explicit] table [x=parameter, y=time, col sep=comma, y error=error] {sort.light.csv};
\addlegendentry{Partial}
\addplot plot[error bars/.cd, y dir=both,y explicit] table [x=parameter, y=time, col sep=comma, y error=error] {sort.none.csv};
\addlegendentry{None}
\addplot plot[error bars/.cd, y dir=both,y explicit] table [x=parameter, y=time, col sep=comma, y error=error] {sort.base.csv};
\addlegendentry{Base}
\end{axis}
\end{tikzpicture}
  \vspace{-5ex}
  \caption{Quicksort benchmark: performance of checks}
  \label{fig:performancesort}
\end{figure}
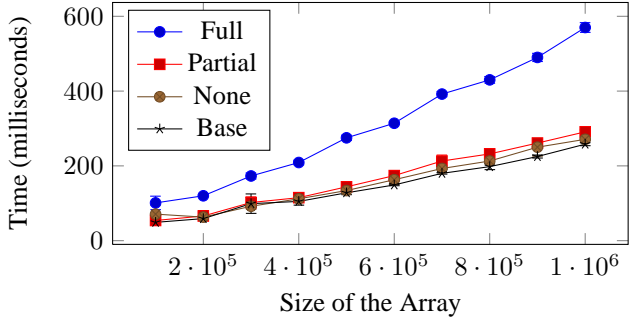

\paragraph{General performance and scaling.}
Our third benchmark investigates the performance in a complex algorithm. We developed a concurrent version of the Delaunay refinement benchmark of the Lonestar benchmark suite~\cite{kulkarni2009lonestar} and ran it using SOM. The result of this benchmark are shown in \figref{fig:performancedelref} and  \tabref{tab:performancedelref}. We used the massive (250k) dataset of the Lonestar benchmark suite and ran it with different numbers of workers. 

The results show that in real programs with nontrivial parallelism, the overhead of checking is reasonable. This example involves a great number of small objects, therefore the difference between no and partial SOM support is noticeable due to the higher demand of memory. We did not create a pure Java version, since our previous benchmarks show almost no difference between \emph{none} and \emph{base}.

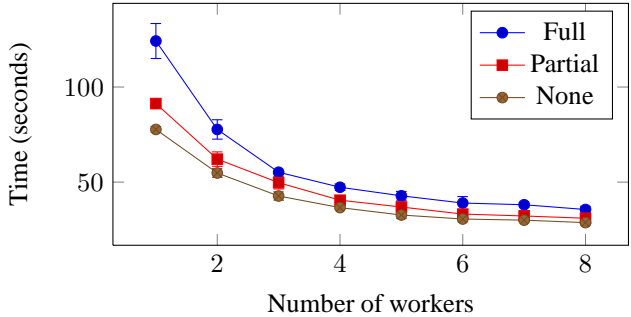
\begin{figure}[t]
  \vspace{-2ex}
  \centering
\begin{tikzpicture}
\begin{axis}[scaled ticks=false,xlabel=Number of workers,ylabel=Time (seconds),height=4.8cm,width=\columnwidth,legend pos=north east]
\addplot plot[error bars/.cd, y dir=both,y explicit] table [x=parameter, y=time, col sep=comma, y error=error] {delref.full.csv};
\addlegendentry{Full}
\addplot plot[error bars/.cd, y dir=both,y explicit] table [x=parameter, y=time, col sep=comma, y error=error] {delref.light.csv};
\addlegendentry{Partial}
\addplot plot[error bars/.cd, y dir=both,y explicit] table [x=parameter, y=time, col sep=comma, y error=error] {delref.none.csv};
\addlegendentry{None}
\end{axis}
\end{tikzpicture}
  \vspace{-5ex}
  \caption{Delaunay refinement benchmark: performance}
  \label{fig:performancedelref}
\end{figure}

\begin{table}[htb]
\noindent
\begin{subtable}{.5\textwidth}
  \centering
  {\small
\begin{tabular}{@{}rrrrrrrrr}
  \hline
 & base & SD & full & SD & part. & SD & none & SD \\ 
  \hline
10k & 75 & 9 & 111 & 3 & 102 & 3 & 80 & 10 \\ 
  20k & 138 & 12 & 175 & 6 & 166 & 4 & 142 & 11 \\ 
  30k & 204 & 12 & 240 & 8 & 230 & 7 & 208 & 12 \\ 
  40k & 267 & 12 & 306 & 8 & 295 & 8 & 274 & 15 \\ 
  50k & 332 & 14 & 369 & 12 & 358 & 11 & 337 & 14 \\ 
  60k & 394 & 15 & 433 & 14 & 424 & 10 & 401 & 14 \\ 
  70k & 460 & 16 & 497 & 15 & 489 & 13 & 467 & 15 \\ 
  80k & 524 & 16 & 564 & 15 & 553 & 14 & 534 & 20 \\ 
  90k & 588 & 17 & 624 & 19 & 616 & 16 & 594 & 21 \\ 
  100k & 655 & 23 & 691 & 20 & 682 & 17 & 661 & 20 \\ 
   \hline
\end{tabular}
 
  }
  \vspace{-1ex}
  \caption{Ping Pong benchmark}\label{tab:performancepp}
  \vspace{1ex}
\end{subtable}
\qquad
\begin{subtable}{.5\textwidth}
  \centering
  {\small
\begin{tabular}{@{}rrrrrrrrr}
  \hline
 & base & SD & full & SD & part. & SD & none & SD \\ 
  \hline
100k & 49 & 5 & 101 & 18 & 54 & 1 & 71 & 24 \\ 
  200k & 59 & 1 & 120 & 2 & 66 & 1 & 63 & 1 \\ 
  300k & 99 & 26 & 173 & 8 & 102 & 15 & 92 & 8 \\ 
  400k & 105 & 10 & 209 & 6 & 115 & 1 & 112 & 7 \\ 
  500k & 128 & 5 & 275 & 7 & 144 & 1 & 134 & 2 \\ 
  600k & 149 & 2 & 314 & 4 & 174 & 2 & 163 & 12 \\ 
  700k & 180 & 2 & 392 & 7 & 213 & 16 & 193 & 2 \\ 
  800k & 198 & 8 & 430 & 9 & 232 & 5 & 213 & 8 \\ 
  900k & 225 & 4 & 490 & 11 & 261 & 5 & 250 & 24 \\ 
  1000k & 258 & 4 & 570 & 13 & 291 & 5 & 271 & 4 \\ 
   \hline
\end{tabular}

  }
  \vspace{-1ex}
  \caption{Quicksort benchmark}\label{tab:performancesort}
  \vspace{1ex}
\end{subtable}
\qquad
\begin{subtable}{.5\textwidth}
  \centering
  {\small
\begin{tabular}{rrrrrrr}
  \hline
 & full & SD & part. & SD & none & SD \\ 
  \hline
 1 & 124.2 & 9.2 & 91.3 & 1.3 & 77.7 & 1.5 \\ 
   2 & 77.7 & 5.1 & 62.1 & 3.8 & 54.8 & 2.3 \\ 
   3 & 55.2 & 0.8 & 49.7 & 3.3 & 42.7 & 2.2 \\ 
   4 & 47.3 & 1.7 & 40.5 & 2.3 & 36.6 & 0.9 \\ 
   5 & 42.8 & 2.3 & 36.9 & 2.7 & 32.7 & 1.9 \\ 
   6 & 39.0 & 3.4 & 33.2 & 1.5 & 30.6 & 1.4 \\ 
   7 & 38.1 & 1.3 & 32.2 & 1.8 & 30.0 & 1.2 \\ 
   8 & 35.6 & 1.7 & 31.0 & 1.5 & 28.7 & 1.3 \\ 
   \hline
\end{tabular}

  }
  \vspace{-1ex}
  \caption{Delaunay refinement benchmark}\label{tab:performancedelref}
  \vspace{1ex}
\end{subtable}
\vspace{-2ex}
\caption{Performance measurements}
\label{tab:measurements}
\end{table}

\subsection{Coding overhead of shared ownership}
To be practical, it is desirable that SOM incurs only a minor coding overhead. \tabref{tab:programs} gives an impression on the number of SOM statements required for checking a program. Whereas the simple examples need none, even the more complex Delaunay refinement, including the graph classes, just need a single SOM statement. The provided synchronization measures take care of most of the ownership transfers and our principle of \emph{first receiver owns} (see \secref{sec:firstrecowns}). In addition to our own examples, we also converted NanoHTTPD~\cite{nanohttpd}, which does not encapsulate the state in a way that makes SOM checking easy. However, we still only needed to modify 1.5\% of the code in order to pass the test cases, thereby also fixing some synchronization issues.

\begin{table}[htb]
\centering
\begin{tabular}{lcc}
  \hline
Name & lines of code & \#SOM statements\\
  \hline
monitor & 48 & 0 \\
quicksort & 69 & 0 \\
ping pong & 71 & 0 \\
car sharing & 78 & 0 \\
dining philosopher & 79 & 0 \\
delaunay refinement & 1187 & 1 \\
nanohttpd & 904 & 14 \\
  \hline
\end{tabular}
\caption{SOM programs}
\label{tab:programs}
\end{table}

\section{Related work}
\label{sec:related-work}


The ownership methodology~\cite{Clarke:1998:OTF:286936.286947} has proved to be successful in reasoning about multithreaded programs before. 
Boyapati et al.~\cite{Boyapati:2002:OTS:582419.582440} define a static type system to ensure freedom from data races and deadlocks, and specify ownership relations between threads and objects similar to our notion of shared ownership. 
Jacobs et al.~\cite{jacobs2005safe} present a technique for static verification of concurrent programs, based on a programming methodology for handling aggregate object structures and their invariants using an ownership system~\cite{barnett:2004:verification}. In contrast, shared ownership allows an object/resource to be owned by multiple resources or processes, and ensures that it is always owned by at least one resource or process. These two requirements force the writer of a program to specify how the ownership of resources changes within the program; this specification can be completely dynamic. Our dynamic checking technique does not require fields being marked as referencing a part of the aggregate or a specific type system. 

In our approach, every process that has access to a resource can share it with others for reading, this is similar to fractional permissions ~\cite{boyland2003checking}, where a process having (a fraction of) a permission can divide it further for giving access to more processes. Only if a process has full permission, it can access the resource for writing.

Martin et al.~\cite{Martin:2010} present an approach to checking ownership policies dynamically in concurrent C/C++ programs, similar to our dynamic technique. However, in order to read memory in parallel, it has to be marked read-only and this can not be reverted. Handling temporarily shared resources for reading is a core part of our approach by using a graph to represent the relationships between processes and resources. 

Apart from ownership-based approaches, other programming techniques have safe memory access as a central goal. 
%
The Aida execution model~\cite{Lublinerman:2011:DI:2048066.2048133} uses delegation to allow a task to access memory associated another task. The tasks merge, becoming one task that has access to the memory of both and also the obligation to achieve the goals of both tasks. 
%
Vaziri et al.~\cite{Vaziri:2006:ASC:1111037.1111067} introduce a method for specifying atomic sets and a static analysis that identifies when locks have to be taken and released. It relieves the programmer from defining locks and correctly using them. However, this automatism also limits the expressiveness and does not take into account other synchronization mechanisms such as channels or queues and does not allow for atomic sets to change dynamically. Our approach is agnostic to the synchronization measures used and allows for flexible aggregation.
%
The original monitor concept~\cite{Hoare:1974:MOS:355620.361161}, where a monitor protects all its associated shared resources (its attributes), follows the idea to make the connection between locks and resources explicit. Message-passing programming models, such as the Actor model~\cite{Hewitt:1973:UMA:1624775.1624804} or SCOOP~\cite{nienaltowski:2007:SCOOP}, simply avoid incorrect memory access by not sharing state; in data-intensive parallel applications, this approach incurs much data messaging overhead, which might be prohibitive unless combined with other techniques~\cite{schill-nanz-meyer:2013:handling}. Our approach applies to shared-memory environments. Approaches for safe parallel programming such as Deterministic Parallel Java~\cite{Bocchino:2009} ensures deterministic execution of parallel programs, thereby also avoiding race conditions, but requires a significant annotation overhead.

Various solutions have been proposed to detect low-level data races. Among them are tools to detect such races statically~\cite{Engler:2003:RES:945445.945468,Naik:2006:ESR:1133981.1134018,Boyapati:2002:OTS:582419.582440,Flanagan:2000:TRD:349299.349328,Anderson:2009:LAC:1542476.1542488}. As static techniques rely on an approximate information, they necessarily report false positives, and excessive reporting of race warnings can limit the usefulness of the approach. Dynamic tools based on locksets~\cite{savage1997eraser, vonPraun:2001:ORD:504282.504288} or happens-before relations~\cite{O'Callahan:2003:HDD:781498.781528,Marino:2009:LES:1542476.1542491,Flanagan:2009:FEP:1542476.1542490,Bond:2010:PPD:1806596.1806626} are typically more precise, but not sound, since not all execution paths are considered. SOM works on the higher abstraction level of checking for ownership violations and is therefore affected less by different execution paths, while also detecting more high-level data races.

\section{Conclusion}
\label{sec:conclusion}

Safe resource access is one of the fundamental concerns in shared-memory concurrent programming. A wide variety of synchronization mechanisms are available to the programmer, but their usage is prone to programming mistakes. Automatically checking correct usage is also difficult because of the missing explicit connection between the mechanisms and the resources they are supposed to protect. 

In this paper, we approached this issue by introducing a generalization of the ownership methodology, shared ownership, which uses a graph-based formalism to express the intended ownership of processes over resources. We demonstrated that shared ownership is compatible with a wide variety of synchronization mechanisms, in particular also readers-writer locks, which cannot be handled by classic ownership approaches. On the basis of this methodology, we then defined an abstract model, SOM, for tracking ownership changes, and applied it in a dynamic technique for checking ownership violations in Java programs. We demonstrated that the technique typically incurs only a moderate overhead; in addition, the model computation can be cleanly removed for production versions, with negligible overhead remaining.

SOM is a general framework for reasoning about shared ownership: it abstracts away from the actual synchronization. Using the dynamic simulation technique presented in this paper is only one way to apply this framework. In future work, we want to develop static techniques to check for ownership violations on the basis of SOM, as well as use it together with general program verification techniques, such as axiomatic semantics. Furthermore, shared ownership could be used for improving the specification of classes and methods by integrating checks into contracts, such as pre-, postconditions and class invariants. 

\paragraph{Acknowledgments.} This work has been supported in part by ERC Grant no. 291389.

\clearpage

\bibliographystyle{splncs03}
\bibliography{bibfile}

\end{document}